\providecommand{\U}[1]{\protect\rule{.1in}{.1in}}
\newtheorem{theorem}{Theorem}
\newtheorem{lemma}[theorem]{Lemma}
\newenvironment{proof}[1][Proof]{\noindent\textbf{#1.} }{\ \rule{0.5em}{0.5em}}
\begin{document}

\title{On interaction-free measurement}
\author{Keiji Matsumoto\\National Institute of Informatics\\2-1-2 Hitotsubashi, Chiyoda-ku, Tokyo 101-8430\\e-mail : keiji@nii.ac.jp}
\maketitle

\section*{Abstract}

This manuscript is inspired by the paper \cite{Kwiat-etal}. In the paper, they
investigate a method to detect existence of an object with arbitrarily small
interaction. Below, we sketch their protocol to motivate the present
manuscript. The object of their protocol is to detect whether the given
blackbox interact with input states or not, with negligible distortion of the
blackbox, and high detection probability.

In this paper, we do two things. First, we prove the above mentioned protocol
is optimal in a certain setting. The main tool here is adversary
method\thinspace\cite{Ambainis}, a classical method in query complexity.
Second, we present a protocol to detect unitary operations with negligible
error and no distortion of the input at all. 

\section{Introduction}

This manuscript is inspired by the paper \cite{Kwiat-etal}. In the paper, they
investigate a method to detect existence of an object with arbitrarily small
interaction. Below, we sketch their protocol to motivate the present
manuscript. The object of their protocol is to detect whether the given
blackbox interact with input states or not, with negligible distortion of the
blackbox, and high detection probability.

They generate a photon, split it into two paths, one of which leads to the
given blackbox. After going through it, the beam is mixed again. After
repeating the process for many times, the photon is measured to decide whether
the blackbox is empty or not. The amplitude for the path going into the
blackbox stays arbitrarily small if the blackbox interacts with the photon,
thus satisfying almost interaction-free condition. Otherwise, this amplitude
grows almost to 1, enabling almost complete detection. If the probability of
correct detection is almost $1$, and the times of repetition is $T$, the
distortion of the system is $O\left(  1/T\right)  $.

In this paper, we do two things. First, we prove the above mentioned protocol
is optimal in a certain setting. The main tool here is adversary
method\thinspace\cite{Ambainis}, a classical method in query complexity.
Second, we present a protocol to detect unitary operations with negligible
error and no distortion of the input at all. The differences between the two
versions are : \ (1) whether the operation to be estimated is a unitary
operation acting only on the input particle $\mathcal{H}_{I}$ only, or with
certain degree of freedom $\mathcal{H}_{B}$ of the blackbox.

\ The first scenario: (1) the given blackbox is an interaction between
$\mathcal{H}_{I}$ and $\mathcal{H}_{B}$, and (2) the distortion is defined as
the change of the reduced density matrix on $\mathcal{H}_{B}$. With additional
assumption that the part of $\mathcal{H}_{B}$ interacting with $\mathcal{H}%
_{I}$ is refreshed for each time, we give the converse statement, insisting
that the protocol in \cite{Kwiat-etal} is optimal. In the second scenario, (1)
the given blackbox is a unitary operation acting only on $\mathcal{H}_{I}$,
and therefore (2) the distortion is defined as the change of reduced density
matrix on $\mathcal{H}_{I}$. This scenario may differ from the initial
motivation of "interaction-free" measurement, but obtained result is strong
enough to justify the setting: if the number of candidates for the given
unknown operation is finite, estimation is possible without any distortion at
all, and with arbitrarily small error.

\section{The first scenario}

Our intention here is to prove optimally of the protocol presented in
\cite{Kwiat-etal} in a reasonable setting. The main tool here is adversary
method\thinspace\cite{Ambainis}, a classical method in query complexity.

In this setting, the given black box is a controlled unitary interaction $U$
between $\mathcal{H}_{I}$ and $\mathcal{H}_{B}$, ($U\in\mathrm{U}\left(
\mathcal{H}_{I}\otimes\mathcal{H}_{B}\right)  $) and $U$ is either
$U_{0}=\mathbf{1}$ or $U_{1}$ ( $\Theta:=\left\{  0,1\right\}  $ ). \ We
suppose that the subsystem $\mathcal{H}_{B^{\prime}}$ of $\mathcal{H}_{B}$ is
interacting with $\mathcal{H}_{I}$. The state of the subsystem $\mathcal{H}%
_{B^{\prime}}$ is refreshed to $\left\vert f\right\rangle $ at every time
step. Put differently, $\mathcal{H}_{B}=\mathcal{H}_{B^{\prime}}^{\otimes n}$,
and the initial state is $\left\vert f\right\rangle ^{\otimes n}$, and each
time one of $\left\vert f\right\rangle $'s subject to $U$. Further,
$A_{\theta}:=\left\langle f\right\vert U_{\theta}\left\vert f\right\rangle $
satisfies
\begin{equation}
\left\Vert A_{1}\right\Vert <1.\label{|A|<1}%
\end{equation}
This means that $U_{1}$ correlates $\mathcal{H}_{C}$ and $\mathcal{H}_{B}$,
i.e.,
\[
U_{1}\neq U_{1,I}\otimes U_{1,B}.
\]
Control bit space is described by $\mathcal{H}_{C}$. Thus, we are given a
either $\mathbf{1}_{I}\otimes\mathbf{1}_{B}$ or $c-U_{1}$, both acting on
$\mathcal{H}_{C}\otimes\mathcal{H}_{I}\otimes\mathcal{H}_{B}$.

In addition to the given black-box operation, we do information processing
operations, state preparation, feedback and so on. The working space necessary
for these operations is denoted by $\mathcal{H}_{W}$. Thus, the whole system
is $\mathcal{H}_{W}\otimes\mathcal{H}_{C}\otimes\mathcal{H}_{I}\otimes
\mathcal{H}_{B}$. We suppose these information processing operation cannot
touch $\mathcal{H}_{B}$, i.e., it is a unitary operation on $\mathcal{H}%
_{W}\otimes\mathcal{H}_{C}\otimes\mathcal{H}_{I}$. The operations between
$t-1$-th and $t$-th application of $c-U$ is denoted by $V_{t}$.

Our concern is to keep $\mathcal{H}_{B}$ unchanged. So the measure of the
distortion is
\[
D_{t}:=\left\Vert \left.  \left\vert \Phi_{\theta,t}\right\rangle \left\langle
\Phi_{\theta,t}\right\vert \,\right\vert _{\mathcal{H}_{B}}-\left(  \left\vert
f\right\rangle \left\langle f\right\vert \right)  ^{\otimes t}\right\Vert
_{1},
\]
where $\left\vert \Phi_{\theta,t}\right\rangle $ is a state of the whole
system (including the control bits) at the time step $t$ . After the $T$-th
application of the $c-U_{\theta}$, we measure the state, but only its
$\mathcal{H}_{C}\otimes\mathcal{H}_{W}\otimes\mathcal{H}_{I}$-part, or
equivalently, the measurement cannot touch $\mathcal{H}_{B}$. Therefore,
\[
\left\Vert \mathrm{tr}\,_{\mathcal{H}_{B}}\left\vert \Phi_{0,T}\right\rangle
\left\langle \Phi_{0,T}\right\vert -\mathrm{tr}\,_{\mathcal{H}_{B}}\left\vert
\Phi_{1,T}\right\rangle \left\langle \Phi_{1,T}\right\vert \right\Vert
_{1}\geq1-\varepsilon
\]
has to hold for $\varepsilon$ with $0<\varepsilon<1$.

By the protocol in \cite{Kwiat-etal}, $D_{T}\,=O\left(  1/T\right)  $ is
achieved for large $T$. They do not use $\mathcal{H}_{W}$, and the measurement
does not touch $\mathcal{H}_{I}$. Even with these restriction, their protocol
turns out to be at least as good as any protocols with the additional work
space $\mathcal{H}_{W}$ and a measurement over $\mathcal{H}_{C}\otimes
\mathcal{H}_{W}\otimes\mathcal{H}_{I}$. \ More specifically, any protocol in
our framework should satisfy
\begin{equation}
D_{T}\geq\left(  C+1\right)  ^{-2}\left(  1-\varepsilon\right)  ^{2}\frac
{1}{T}=O\left(  \frac{1}{T}\right)  ,\label{D>1/T}%
\end{equation}
where
\[
C:=6\left(  1-\left\Vert A_{1}\right\Vert ^{2}\right)  ^{-1/2}.
\]
Here, observe by assumption(\ref{|A|<1}), $C$ is finite. 

Since $\left\vert \Phi_{\theta,t}\right\rangle $ is not tractable, we
introduce \
\begin{align*}
\left\vert \psi_{t}\right\rangle  &  :=\left\langle f^{\otimes t}\right.
\left\vert \Phi_{\theta,t}\right\rangle \\
&  =\left\langle f^{\otimes t}\right\vert \left(  c-U_{\theta}\right)
V_{t}\otimes\mathbf{1}_{B}\cdots\left(  c-U_{\theta}\right)  V_{2}%
\otimes\mathbf{1}_{B}\left(  c-U_{\theta}\right)  V_{1}\otimes\mathbf{1}%
_{B}\left\vert \varphi\right\rangle \left\vert f^{\otimes t}\right\rangle \\
&  =\left(  c-A_{\theta}\right)  V_{t}\cdots\left(  c-A_{\theta}\right)
V_{2}\left(  c-A_{\theta}\right)  V_{1}\left\vert \varphi\right\rangle ,
\end{align*}
with $V_{t}$ acting on $\mathcal{H}_{C}\otimes\mathcal{H}_{W}\otimes
\mathcal{H}_{I}$ , and $A_{\theta}:=\left\langle f\right\vert U_{\theta
}\left\vert f\right\rangle $. Also %

\begin{align*}
\left\vert \varphi_{t}\right\rangle  &  :=\left\langle f^{\otimes t}\right.
\left\vert \Phi_{0,t}\right\rangle =V_{t}\cdots V_{2}V_{1}\left\vert
\varphi\right\rangle ,\\
\left\vert \psi_{t}^{\prime}\right\rangle  &  :=V_{t}\left\vert \psi
_{t-1}\right\rangle ,
\end{align*}
so that $\left\vert \psi_{t}\right\rangle =c-A_{1}\left\vert \psi_{t}^{\prime
}\right\rangle $. 

The distortion $D_{t}$ is bounded from below as follows.
\begin{align*}
D_{t} &  =\left\Vert \left.  \left\vert \Phi_{1,t}\right\rangle \left\langle
\Phi_{1,t}\right\vert \,\right\vert _{\mathcal{H}_{B}}-\left(  \left\vert
f\right\rangle \left\langle f\right\vert \right)  ^{\otimes t}\right\Vert
_{1}\\
&  \geq1-\left\langle f^{\otimes t}\right\vert \left(  \left.  \left\vert
\Phi_{1,t}\right\rangle \left\langle \Phi_{1,t}\right\vert \,\right\vert
_{\mathcal{H}_{B}}\right)  \left\vert f^{\otimes t}\right\rangle \\
&  =1-\mathrm{tr}\left\langle f^{\otimes t}\right.  \left\vert \Phi
_{1,t}\right\rangle \left\langle \Phi_{1,t}\right\vert \left.  f^{\otimes
t}\right\rangle \\
&  =1-\left\Vert \left\langle f^{\otimes t}\right.  \left\vert \Phi
_{1,t}\right\rangle \right\Vert ^{2}\\
&  =1-\left\Vert \psi_{t}\right\Vert ^{2}.
\end{align*}
Also,%

\begin{align}
&  \left\Vert \mathrm{tr}\,_{\mathcal{H}_{B}}\left\vert \Phi_{0,t}%
\right\rangle \left\langle \Phi_{0,t}\right\vert -\mathrm{tr}\,_{\mathcal{H}%
_{B}}\left\vert \Phi_{1,t}\right\rangle \left\langle \Phi_{1,t}\right\vert
\right\Vert _{1}\nonumber\\
&  \leq\left\Vert \mathrm{tr}\,_{\mathcal{H}_{B}}\left\vert \Phi
_{0,t}\right\rangle \left\langle \Phi_{0,t}\right\vert -\left\langle
f^{\otimes t}\right.  \left\vert \Phi_{1,t}\right\rangle \left\langle
\Phi_{1,t}\right\vert \left.  f^{\otimes t}\right\rangle \right\Vert
_{1}+\left\Vert \left\langle f^{\otimes t}\right.  \left\vert \Phi
_{1,t}\right\rangle \left\langle \Phi_{1,t}\right\vert \left.  f^{\otimes
t}\right\rangle -\mathrm{tr}\,_{\mathcal{H}_{B}}\left\vert \Phi_{1,t}%
\right\rangle \left\langle \Phi_{1,t}\right\vert \right\Vert _{1}\nonumber\\
&  =\left\Vert \mathrm{tr}\,_{\mathcal{H}_{B}}\left\vert \Phi_{0,t}%
\right\rangle \left\langle \Phi_{0,t}\right\vert -\left\langle f^{\otimes
t}\right.  \left\vert \Phi_{1,t}\right\rangle \left\langle \Phi_{1,t}%
\right\vert \left.  f^{\otimes t}\right\rangle \right\Vert _{1}+1-\mathrm{tr}%
\left\langle f^{\otimes t}\right.  \left\vert \Phi_{1,t}\right\rangle
\left\langle \Phi_{1,t}\right\vert \left.  f^{\otimes t}\right\rangle
\nonumber\\
&  =\left\Vert \left\vert \varphi_{t}\right\rangle \left\langle \varphi
_{t}\right\vert -\left\vert \psi_{t}\right\rangle \left\langle \psi
_{t}\right\vert \right\Vert _{1}+1-\left\Vert \psi_{t}\right\Vert
^{2}.\label{dist<}%
\end{align}
Therefore, the key quantities can be bounded from safer side by tracking
$\left\vert \psi_{t}\right\rangle $ and $\left\vert \varphi_{t}\right\rangle
$. 

Below, for each $\left\vert \varphi\right\rangle \in\mathcal{H}_{C}%
\otimes\mathcal{H}_{W}\otimes\mathcal{H}_{I}\otimes\mathcal{H}_{B}$,
$\left\vert \varphi_{1}\right\rangle $ and $\left\vert \varphi_{2}%
\right\rangle $ is the part part which is subject to/not subject to
$U_{\theta}$, i.e., if $\left\{  \left\vert i\right\rangle ;i=1,2\right\}  $
is the CONS of the $\mathcal{H}_{C}$ corresponding to the control bit,
\[
\left\vert \varphi_{i}\right\rangle :=\left\langle i\right.  \left\vert
\varphi\right\rangle \text{ }\left(  i=1,2\right)  ,
\]
so that
\[
c-U_{\theta}\left[
\begin{array}
[c]{c}%
\varphi_{1}\\
\varphi_{2}%
\end{array}
\right]  =\left[
\begin{array}
[c]{c}%
\varphi_{1}\\
U_{\theta}\varphi_{2}%
\end{array}
\right]  .
\]
Then we have%

\begin{align*}
\left\Vert \psi_{t}\right\Vert ^{2}  &  =\left\Vert \psi_{t,1}^{\prime
}\right\Vert ^{2}+\left\Vert A_{1}\right\Vert ^{2}\left\Vert \psi
_{t,2}^{\prime}\right\Vert ^{2}\\
&  =\left\Vert \psi_{t}^{\prime}\right\Vert ^{2}-\left\Vert \psi_{t,2}%
^{\prime}\right\Vert ^{2}+\left\Vert A_{1}\right\Vert ^{2}\left\Vert
\psi_{t,2}^{\prime}\right\Vert ^{2}\\
&  =\left\Vert \psi_{t-1}\right\Vert ^{2}-\left\Vert \psi_{t,2}^{\prime
}\right\Vert ^{2}+\left\Vert A_{1}\right\Vert ^{2}\left\Vert \psi
_{t,2}^{\prime}\right\Vert ^{2},
\end{align*}
which implies%
\begin{equation}
\left(  1-\left\Vert A_{1}\right\Vert ^{2}\right)  \left\Vert \psi
_{t,2}^{\prime}\right\Vert ^{2}=\left\Vert \psi_{t-1}\right\Vert
^{2}-\left\Vert \psi_{t}\right\Vert ^{2}. \label{phi_2}%
\end{equation}

On the other hand, the first term of (\ref{dist<}) is bounded from above as
follows:%
\begin{align*}
&  \left\Vert \left\vert \varphi_{t}\right\rangle \left\langle \varphi
_{t}\right\vert -\left\vert \psi_{t}\right\rangle \left\langle \psi
_{t}\right\vert \right\Vert _{1}\\
&  \leq\left\Vert \left\vert \varphi_{t}\right\rangle \left\langle \varphi
_{t}\right\vert -\left\vert \psi_{t}^{\prime}\right\rangle \left\langle
\psi_{t}^{\prime}\right\vert \right\Vert _{1}+\left\Vert \left\vert \psi
_{t}^{\prime}\right\rangle \left\langle \psi_{t}^{\prime}\right\vert
-\left\vert \psi_{t}\right\rangle \left\langle \psi_{t}\right\vert \right\Vert
_{1}\\
&  =\left\Vert V_{t}\left(  \left\vert \varphi_{t-1}\right\rangle \left\langle
\varphi_{t-1}\right\vert -\left\vert \psi_{t-1}\right\rangle \left\langle
\psi_{t-1}\right\vert \right)  V_{t}^{\dagger}\right\Vert _{1}\\
&  +\left\Vert \left\vert \psi_{t}^{\prime}\right\rangle \left\langle \psi
_{t}^{\prime}\right\vert -\left(  c-A_{1}\right)  \left\vert \psi_{t}^{\prime
}\right\rangle \left\langle \psi_{t}^{\prime}\right\vert \left(
c-A_{1}\right)  ^{\dagger}\right\Vert _{1}\\
&  =\left\Vert \left\vert \varphi_{t-1}\right\rangle \left\langle
\varphi_{t-1}\right\vert -\left\vert \psi_{t-1}\right\rangle \left\langle
\psi_{t-1}\right\vert \right\Vert _{1}\\
&  +\left\Vert \left\vert \psi_{t}^{\prime}\right\rangle \left\langle \psi
_{t}^{\prime}\right\vert -\left(  c-A_{1}\right)  \left\vert \psi_{t}^{\prime
}\right\rangle \left\langle \psi_{t}^{\prime}\right\vert \left(
c-A_{1}\right)  ^{\dagger}\right\Vert _{1}.
\end{align*}
Observe, by \ $\left\Vert A_{1}\right\Vert <1$ and%

\begin{align*}
&  \left\Vert \left\vert \psi_{t}^{\prime}\right\rangle \left\langle \psi
_{t}^{\prime}\right\vert -\left(  c-A_{1}\right)  \left\vert \psi_{t}^{\prime
}\right\rangle \left\langle \psi_{t}^{\prime}\right\vert \left(
c-A_{1}\right)  ^{\dagger}\right\Vert _{1}\\
&  =\left\Vert \left[
\begin{array}
[c]{cc}%
0 & \left\vert \psi_{t,1}^{\prime}\right\rangle \left\langle \psi
_{t,2}^{\prime}\right\vert \left(  \mathbf{1}-A_{1}\right)  ^{\dagger}\\
\left(  \mathbf{1}-A_{1}\right)  \left\vert \psi_{t,2}^{\prime}\right\rangle
\left\langle \psi_{t,1}^{\prime}\right\vert  & A_{1}\left\vert \psi
_{t,2}^{\prime}\right\rangle \left\langle \psi_{t,2}^{\prime}\right\vert
\left(  A_{1}\right)  ^{\dagger}-\left\vert \psi_{t,1}^{\prime}\right\rangle
\left\langle \psi_{t,2}^{\prime}\right\vert
\end{array}
\right]  \right\Vert _{1}\\
&  \leq2\left\Vert \mathbf{1}-A_{1}\right\Vert \left\Vert \psi_{t,1}^{\prime
}\right\Vert \left\Vert \psi_{t,2}^{\prime}\right\Vert +\left(  \left\Vert
A_{1}\right\Vert ^{2}+1\right)  \left\Vert \psi_{t,2}^{\prime}\right\Vert
^{2}\\
&  \leq6\left\Vert \psi_{t,2}^{\prime}\right\Vert \\
&  \leq C\sqrt{\left\Vert \psi_{t-1}\right\Vert ^{2}-\left\Vert \psi
_{t}\right\Vert ^{2}},
\end{align*}
where we had used (\ref{phi_2}) to show the last inequality.

Therefore,
\begin{align*}
&  \left\Vert \mathrm{tr}\,_{\mathcal{H}_{B}}\left\vert \Phi_{0,T}%
\right\rangle \left\langle \Phi_{0,T}\right\vert -\mathrm{tr}\,_{\mathcal{H}%
_{B}}\left\vert \Phi_{1,T}\right\rangle \left\langle \Phi_{1,T}\right\vert
\right\Vert _{1}\\
&  \leq\left\Vert \left\vert \varphi_{T}\right\rangle \left\langle \varphi
_{T}\right\vert -\left\vert \psi_{T}\right\rangle \left\langle \psi
_{T}\right\vert \right\Vert _{1}+1-\left\Vert \psi_{T}\right\Vert ^{2}\\
&  \leq C\sum_{t=1}^{T}\sqrt{\left\Vert \psi_{t-1}\right\Vert ^{2}-\left\Vert
\psi_{t}\right\Vert ^{2}}+1-\left\Vert \psi_{T}\right\Vert ^{2}\\
&  \leq C\sqrt{T\sum_{t=1}^{T}\left(  \left\Vert \psi_{t-1}\right\Vert
^{2}-\left\Vert \psi_{t}\right\Vert ^{2}\right)  }+1-\left\Vert \psi
_{T}\right\Vert ^{2}\\
&  =C\sqrt{T\left(  \left\Vert \psi_{0}\right\Vert ^{2}-\left\Vert \psi
_{T}\right\Vert ^{2}\right)  }+1-\left\Vert \psi_{T}\right\Vert ^{2}\\
&  =C\sqrt{\left(  1-\left\Vert \psi_{T}\right\Vert ^{2}\right)
T}+1-\left\Vert \psi_{T}\right\Vert ^{2}\\
&  \leq C\sqrt{T\,D_{T}}.+D_{T}\\
&  \leq\left(  C+1\right)  \sqrt{T\,D_{T}}%
\end{align*}
After all, to distinguish two operations, we have to have%
\begin{align*}
1-\varepsilon &  \leq\left\Vert \mathrm{tr}\,_{\mathcal{H}_{B}}\left\vert
\Phi_{0,T}\right\rangle \left\langle \Phi_{0,T}\right\vert -\mathrm{tr}%
\,_{\mathcal{H}_{B}}\left\vert \Phi_{1,T}\right\rangle \left\langle \Phi
_{1,T}\right\vert \right\Vert _{1}\\
&  \leq\left(  C+1\right)  \sqrt{T\,D_{T}},
\end{align*}
that leads to (\ref{D>1/T}).

\section{The second scenario}

The problem treated in this section is interaction-free detection of unitary
operations, where the unitary operation is chosen from the family $\left\{
U_{\theta}\right\}  _{\theta\in\Theta}$ of unitary transforms over
$\mathcal{H}_{I}$ with $k:=\left\vert \Theta\right\vert <\infty$ and
$d:=\dim\,\mathcal{H}_{I}<\infty$ . Here, the blackbox $U_{\theta}$ is given
in the form of controlled operation,%
\[
c-U_{\theta}:=\left[
\begin{array}
[c]{cc}%
\mathbf{1} & 0\\
0 & U_{\theta}%
\end{array}
\right]  ,
\]
thus making total phase meaningful. Interaction-free means that the reduced
state on $\mathcal{H}_{I}$ is unchanged throughout the process, modulo unitary
transforms independent of $\theta$. (Since this part can be canceled if
necessary.) The control space is denoted by $\mathcal{H}_{C}$, which
corresponds to which-path information. Without loss of generality, we suppose
$U_{\theta_{0}}=\mathbf{1}$. ( If this is not the case, we apply control
$U_{\theta_{0}}^{\dagger}$ right before the given control operation.)

When $U_{\theta}$'s are commutative, the standard phase estimation protocol is
sufficient to achieve the purpose: One can estimate eigenvalue arbitrary
accuracy by inflating the number of qubits (see Chapter\thinspace5 of
\cite{Nielsen-Chuang} for example).

The general case is reduced to this commutative case. Define, for each unitary
operator $U$,
\[
X_{U,\vec{\lambda}}^{k}:=\sum_{i=0}^{d-1}e^{\sqrt{-1}\lambda_{i}}\left\vert
e_{i+k}\right\rangle \left\langle e_{i}\right\vert ,\,
\]
where '+' in the subscript of $e$ is in the sense of modulo $d$, $\vec
{\lambda}=\left(  \lambda^{1},\cdots,\lambda^{d}\right)  $ and $\left\{
\left\vert e_{i}\right\rangle ;i=0,\cdots,d-1\right\}  $ are the phase of
eigenvalues and eigenvectors of $U$. Observe that the $i$-th eigenvalue of
$\left(  X_{U,\vec{\lambda}}^{k}\right)  ^{\dag}U^{\prime}X_{U,\vec{\lambda}%
}^{k}$ is $\left(  i-k\right)  $-th eigenvalue of $U^{\prime}$, if
eigenvectors of $U^{\prime}$ are also $\left\{  \left\vert e_{i}\right\rangle
;i=0,\cdots,d-1\right\}  $. Hence with $\Lambda:=\left(  \vec{\lambda}%
_{0},\vec{\lambda}_{1},\cdots\vec{\lambda}_{d-1}\right)  $ and
\[
F_{U,\Lambda}\left(  U^{\prime}\right)  :=\left(  X_{U,\vec{\lambda}_{d-1}%
}^{d-1}\right)  ^{\dag}U^{\prime}X_{U,\vec{\lambda}_{d-1}}^{d-1}\cdot
\ldots\cdot\left(  X_{U,\vec{\lambda}_{1}}^{1}\right)  ^{\dag}U^{\prime
}X_{U,\vec{\lambda}_{1}}^{1}\cdot\left(  X_{U,\vec{\lambda}_{0}}^{0}\right)
^{\dag}U^{\prime}X_{U,\vec{\lambda}_{0}}^{0},
\]
we have
\[
F_{U,\Lambda}\left(  U^{\prime}\right)  =\left(  \det U^{\prime}\right)
\mathbf{1}_{T}\mathbf{.}%
\]
In particular, $F_{U,\vec{\lambda}}\left(  \mathbf{1}\right)  =\mathbf{1}$.
Therefore, acting $F_{U,\Lambda}$ on the target space $\mathcal{H}_{I}$, \
\[
\mathbf{I}_{C}\otimes F_{U,\Lambda}\left(  c-U^{\prime}\right)  =F_{U,\Lambda
}\left(  \mathbf{1}_{T}\right)  \oplus F_{U,\Lambda}\left(  U^{\prime}\right)
=\left(  \det U^{\prime}\right)  \mathbf{1}_{C}\mathbf{\otimes1}_{T}.
\]
Thus, this transform can be implementable by acting $X_{U_{\theta}}$'s and
\ $Z_{\vec{\lambda},U_{\theta}}$'s acting on the target space $\mathcal{H}%
_{I}$.

Our basic idea is as follows. Given a black box $c-U$, we transform it to
$c-U^{\prime}$, where
\[
U^{\prime}=F_{U_{\theta_{i}},\Lambda_{i}}\circ\ldots\circ F_{U_{\theta_{2}%
},\Lambda_{2}}\circ F_{U_{\theta_{1}},\Lambda_{1}}\left(  U\right)  ,
\]
with proper choice of $\theta_{1}$, $\Lambda_{1}$, $\theta_{2}$, $\Lambda_{2}%
$, $\cdots$. \ \ $\theta_{1}$, $\Lambda_{1}$, $\theta_{2}$, $\Lambda_{2}$,
$\cdots$ are chosen so that all the members of the family
\begin{equation}
\left\{  F_{U_{\theta_{i}},\Lambda_{i}}\circ\ldots\circ F_{U_{\theta_{2}%
},\Lambda_{2}}\circ F_{U_{\theta_{1}},\Lambda_{1}}\left(  U_{\theta}\right)
\right\}  _{\theta\in\Theta}\label{new-family}%
\end{equation}
are commutative, and exactly one element, let it be $U_{\theta_{\ast}}$, is
not scalar, \textit{i.e.}, not the constant multiple of the identity. This
means that at least one eigenvalue of $U_{\theta_{\ast}}$(The latter condition
is sufficient to make sure that this family contain at least two distinct operations.)

Then we run the circuit of the phase estimation, and do the projective
measurement that judges $\theta\neq\theta_{\ast}$ without any error, and
$\theta=\theta_{\ast}$ with small error, which can be made arbitrarily small.
If the estimate is $\theta_{\ast}$, we terminate and let $\theta_{\ast}$ be
the final estimate. (In this case $\theta\neq\theta_{\ast}$ happens with small
probability, but it can be made arbitrarily small.) Otherwise, we apply the
process above to the smaller family $\Theta/\left\{  \theta_{\ast}\right\}  $
. (In this case, the true value of $\theta$ is one of $\Theta/\left\{
\theta_{\ast}\right\}  $ with certainty.) \ 

For (\ref{new-family}) to satisfy the requirements, $\theta_{1}$, $\Lambda
_{1}$, $\theta_{2}$, $\Lambda_{2}$, $\cdots$ are chosen as follows. Pick up
$U_{\theta_{1}}$ that is not a constant multiple of $\mathbf{1}$,\ and let
$U_{\theta^{\prime}}$ be the one which does not commute with $U_{\theta_{1}}$.
(If there is no such $\theta^{\prime}$, no preprocessing is necessary.) \ Then
we use the following lemma:

\begin{lemma}
There is $\Lambda$ with $F_{U,\Lambda}\left(  U^{\prime}\right)  \neq
c\mathbf{1}$ for all $U^{\prime}$ with $\left[  U,U^{\prime}\right]  \neq0$. \ 
\end{lemma}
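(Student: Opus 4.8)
The plan is to reduce the lemma to a genericity statement about the phase parameters, exploiting the product form of $F_{U,\Lambda}$. I would begin with two remarks. Because $U$ is diagonal in the eigenbasis $\left\{ \left\vert e_{i}\right\rangle \right\}$ used to build $X_{U,\vec{\lambda}}^{k}$, any $U^{\prime}$ with $\left[ U,U^{\prime}\right] \neq0$ is necessarily \emph{non-diagonal} in that basis, since two operators diagonal in a common basis commute. Also, the operations occurring in the surrounding construction form a finite set, so it is enough to show: for each fixed $U^{\prime}$ that is non-diagonal in $\left\{ \left\vert e_{i}\right\rangle \right\}$, the set of $\Lambda$ with $F_{U,\Lambda}\left( U^{\prime}\right) \in\mathbb{C}\mathbf{1}$ is a proper, hence nowhere dense, closed subset of the torus of phase choices; one then takes $\Lambda$ outside the finite union of these sets. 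It will be convenient to record that $X_{U,\vec{\lambda}}^{k}=S^{k}D_{\vec{\lambda}}$, where $S:\left\vert e_{i}\right\rangle \mapsto\left\vert e_{i+1}\right\rangle$ is the cyclic shift and $D_{\vec{\lambda}}=\sum_{i}e^{\sqrt{-1}\lambda_{i}}\left\vert e_{i}\right\rangle \left\langle e_{i}\right\vert$, so that conjugating an operator by $X_{U,\vec{\lambda}}^{k}$ shifts its eigen-indices cyclically by $k$ and multiplies its $\left( l,m\right)$-entry by $e^{\sqrt{-1}\left( \lambda_{m}-\lambda_{l}\right) }$.

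For the core step I would fix such a $U^{\prime}$, set $\vec{\lambda}_{1}=\cdots=\vec{\lambda}_{d-1}=\vec{0}$, and let only $\vec{\lambda}_{0}$ vary. Then
\[
F_{U,\Lambda}\left( U^{\prime}\right) =P\cdot D^{\dagger}U^{\prime}D,\qquad P:=\left( S^{d-1}\right) ^{\dagger}U^{\prime}S^{d-1}\cdots\left( S^{1}\right) ^{\dagger}U^{\prime}S^{1},\quad D:=D_{\vec{\lambda}_{0}},
\]
where $P$ is a fixed unitary (a product of conjugates of the unitary $U^{\prime}$) and $D$ ranges over all diagonal unitaries. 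If $F_{U,\Lambda}\left( U^{\prime}\right)$ were scalar for every $\vec{\lambda}_{0}$, then $D^{\dagger}U^{\prime}D=c\left( D\right) P^{-1}$ for all diagonal unitary $D$; evaluating at $D=\mathbf{1}$ gives $P^{-1}=c_{0}^{-1}U^{\prime}$, hence $D^{\dagger}U^{\prime}D=\tilde{c}\left( D\right) U^{\prime}$ for all such $D$, that is $\overline{d_{l}}d_{m}u_{lm}^{\prime}=\tilde{c}\left( D\right) u_{lm}^{\prime}$ for all $l,m$. I would derive a contradiction by comparing entries: picking $l\neq m$ with $u_{lm}^{\prime}\neq0$ (possible since $U^{\prime}$ is non-diagonal) and varying one phase of $D$ is incompatible with $\tilde{c}\left( D\right) \equiv1$, which is forced by any nonzero diagonal entry of $U^{\prime}$; and if $U^{\prime}$ has vanishing diagonal, then $U^{\prime}$ being unitary supplies a second nonzero off-diagonal entry, and $\tilde{c}\left( D\right)$ would have to agree with two genuinely different functions of $D$. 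Hence $\Lambda\mapsto F_{U,\Lambda}\left( U^{\prime}\right)$ is not identically scalar-valued; since its matrix entries are trigonometric polynomials in the components of $\Lambda$, the locus where it is scalar is a proper closed subvariety.

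Taking $\Lambda$ outside the finite union of these subvarieties — one for each operation that does not commute with $U$ — completes the argument. I expect the main obstacle to be the entry-by-entry comparison in the core step, in particular the case $\mathrm{tr}\,U^{\prime}=0$, where $U^{\prime}$ may be a derangement-type permutation and one must check that unitarity really does force two distinct nonzero off-diagonal entries whose index pairs cannot be reconciled by any single assignment $D\mapsto\overline{d_{l}}d_{m}$; the case of degenerate $U$ needs nothing new, since $\left[ U,U^{\prime}\right] \neq0$ still forces $U^{\prime}$ non-diagonal in $\left\{ \left\vert e_{i}\right\rangle \right\}$. It is worth noting that the complementary identity $F_{U,\Lambda}\left( U^{\prime}\right) =\left( \det U^{\prime}\right) \mathbf{1}$ for $U^{\prime}$ diagonal in this basis, already established above, shows the resulting dichotomy is exactly what the surrounding reduction needs.
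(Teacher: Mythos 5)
Your proof is correct, but it follows a genuinely different route from the paper's. The paper argues by contradiction and \emph{averages}: assuming $F_{U,\Lambda}\left(U^{\prime}\right)=c_{\Lambda}\mathbf{1}$ for every $\Lambda$, it integrates the first $d-1$ conjugated factors over independent uniform phases $\vec{\lambda}_{1},\ldots,\vec{\lambda}_{d-1}$; each factor $\left(X_{U,\vec{\lambda}_{k}}^{k}\right)^{\dag}U^{\prime}X_{U,\vec{\lambda}_{k}}^{k}$ averages to a diagonal matrix (the random phases kill its off-diagonal entries, and independence lets the expectation of the product factorize), so the identity $\mathrm{E}\left[\cdots\right]\cdot\left(X_{U,\vec{\lambda}_{0}}^{0}\right)^{\dag}U^{\prime}X_{U,\vec{\lambda}_{0}}^{0}=\mathrm{E}\left[c_{\Lambda}\right]\mathbf{1}$ forces $\left(X_{U,\vec{\lambda}_{0}}^{0}\right)^{\dag}U^{\prime}X_{U,\vec{\lambda}_{0}}^{0}$, hence $U^{\prime}$, to be diagonal, contradicting $\left[U,U^{\prime}\right]\neq0$. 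You instead \emph{freeze} $\vec{\lambda}_{1}=\cdots=\vec{\lambda}_{d-1}=\vec{0}$ and vary only $\vec{\lambda}_{0}$, reducing scalarity to the entrywise identity $\overline{d_{l}}d_{m}u_{lm}^{\prime}=\tilde{c}\left(D\right)u_{lm}^{\prime}$ and refuting it because distinct index pairs give distinct characters $D\mapsto\overline{d_{l}}d_{m}$ of the torus; you then upgrade the per-$U^{\prime}$ conclusion to a single $\Lambda$ by the trigonometric-polynomial / nowhere-dense-union argument. What each buys: your version makes explicit the uniformity over the finite family of non-commuting $U^{\prime}$ (which the literal statement of the lemma asks for but the paper's proof, written for one fixed $U^{\prime}$, leaves implicit), and it sidesteps a degenerate case in the averaging argument (if $\mathrm{E}\left[c_{\Lambda}\right]=0$ the averaged product could vanish and one cannot then conclude that $\left(X_{U,\vec{\lambda}_{0}}^{0}\right)^{\dag}U^{\prime}X_{U,\vec{\lambda}_{0}}^{0}$ is diagonal); the paper's argument is shorter and needs no case analysis on the zero pattern of $U^{\prime}$. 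The worry you flag about the all-zero-diagonal case resolves exactly as you suspect: a unitary has a nonzero entry in every row, so two distinct rows supply two distinct off-diagonal index pairs, whose characters cannot both equal $\tilde{c}\left(D\right)$.
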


\begin{proof}
Suppose the contrary is true, i.e., for any $\Lambda$
\[
F_{U,\Lambda}\left(  U^{\prime}\right)  =c_{\Lambda}\mathbf{1}.
\]
Taking average with respect to $\vec{\lambda}_{1},\cdots\vec{\lambda}_{d-1}$
according to the uniform dstribution, $\ $%
\[
\mathrm{E}\left[  \left(  X_{U,\vec{\lambda}_{d-1}}^{d-1}\right)  ^{\dag
}U^{\prime}X_{U,\vec{\lambda}_{d-1}}^{d-1}\cdot\ldots\cdot\left(
X_{U,\vec{\lambda}_{1}}^{1}\right)  ^{\dag}U^{\prime}X_{U,\vec{\lambda}_{1}%
}^{1}\right]
\]
is diagonal with respect to the basis $\left\{  \left\vert e_{i}\right\rangle
;i=0,\cdots,d-1\right\}  $. Therefore, by
\[
\mathrm{E}\left[  \left(  X_{U,\vec{\lambda}_{d-1}}^{d-1}\right)  ^{\dag
}U^{\prime}X_{U,\vec{\lambda}_{d-1}}^{d-1}\cdot\ldots\cdot\left(
X_{U,\vec{\lambda}_{1}}^{1}\right)  ^{\dag}U^{\prime}X_{U,\vec{\lambda}_{1}%
}^{1}\right]  \cdot\left(  X_{U,\vec{\lambda}_{0}}^{0}\right)  ^{\dag
}U^{\prime}X_{U,\vec{\lambda}_{0}}^{0}=\mathrm{E}\left[  c_{\Lambda}\right]
\mathbf{1,}%
\]
$\left(  X_{U,\vec{\lambda}_{0}}^{0}\right)  ^{\dag}U^{\prime}X_{U,\vec
{\lambda}_{0}}^{0}$ is diagonal for all $\vec{\lambda}_{0}$. This can be true
only if $U^{\prime}$ is diagonal, contradicting with the assumption $\left[
U,U^{\prime}\right]  \neq0$. Therefore, there is at least one $\Lambda$ with
$F_{U_{\theta_{1}},\Lambda}\left(  U_{\theta^{\prime}}\right)  \neq
c\mathbf{1}$.
\end{proof}

By this lemma, there is $\Lambda_{1}$ such that $F_{U_{\theta_{1}},\Lambda
_{1}}\left(  U_{\theta^{\prime}}\right)  $ is not a constant multiple of
$\mathbf{1}$. The number of elements not constant multiples of $\mathbf{1}$ in
the family $\left\{  F_{U_{\theta_{1}},\Lambda_{1}}\left(  U_{\theta}\right)
\right\}  _{\theta\in\Theta}$ is non-zero, and smaller than the number of such
elements in $\left\{  U_{\theta}\right\}  _{\theta\in\Theta}$. Repeating this
process for many times, we obtain (\ref{new-family}) that satisfies the requirements.

\end{document}